\newtheorem{theorem}{Theorem}
\newtheorem{lemma}{Lemma}
\newcommand{\figwidth}{8}
\begin{document}
%
\title{User Fairness Non-orthogonal Multiple Access (NOMA) for 5G Millimeter-Wave Communications with Analog Beamforming}
%
%
%
\author{
        Zhenyu Xiao,~\IEEEmembership{Senior Member,~IEEE,}
        Lipeng Zhu,
        Zhen Gao,~\IEEEmembership{Member,~IEEE,}
        Dapeng Oliver Wu,~\IEEEmembership{Fellow,~IEEE}
        and Xiang-Gen Xia,~\IEEEmembership{Fellow,~IEEE}
\thanks{Z. Xiao and L. Zhu are with the School of
Electronic and Information Engineering, Beihang University, Beijing 100191, China}
\thanks{Z. Gao is with the Advanced Research Institute of Multidisciplinary Science, Beijing Institute of Technology, Beijing 100081, China.}
\thanks{D. O. Wu is with the Department of Electrical and Computer Engineering, University of Florida, Gainesville, FL 32611, USA.}
\thanks{X.-G. Xia is with the Department of Electrical and Computer Engineering, University of Delaware, Newark, DE 19716, USA.}
}

%
%

\maketitle

\begin{abstract}
The integration of non-orthogonal multiple access in millimeter-Wave communications (mmWave-NOMA) can significantly improve the spectrum efficiency and increase the number of users in the fifth-generation (5G) mobile communication. In this paper we consider a downlink mmWave-NOMA cellular system, where the base station is mounted with an analog beamforming phased array, and multiple users are served in the same time-frequency resource block. To guarantee user fairness, we formulate a joint beamforming and power allocation problem to maximize the minimal achievable rate among the users, i.e., we adopt the max-min fairness. As the problem is difficult to solve due to the non-convex formulation and high dimension of the optimization variables, we propose a sub-optimal solution, which makes use of the spatial sparsity in the angle domain of the mmWave channel. In the solution, the closed-form optimal power allocation is obtained first, which reduces the joint optimization problem into an equivalent beamforming problem. Then an appropriate beamforming vector is designed. Simulation results show that the proposed solution can achieve a near-upper-bound performance in terms of achievable rate, which is significantly better than that of the conventional mmWave orthogonal multiple access (mmWave-OMA) system.
\end{abstract}

\begin{IEEEkeywords}
millimeter-wave communications, Non-orthogonal multiple access, mmWave-NOMA, user fairness, analog beamforming, power allocation.
\end{IEEEkeywords}

%
\IEEEpeerreviewmaketitle

\section{Introduction}
\IEEEPARstart{W}{ith} the rapid growth of mobile data traffic, higher data rate is an insistent requirement in the fifth generation (5G) of mobile communication \cite{andrews2014will}. Millimeter-Wave (mmWave) communications, with frequency ranging from 30-300 GHz, provides abundant spectrum resources and is perceived as a candidate key technology for 5G \cite{andrews2014will,rapp2013mmIEEEAccess,XiaoM2017survmmWave}. In addition to the large amount of bandwidth, the mmWave-band signal has a shorter wavelength compared with the traditional microwave-band signal, which makes it possible to equip a large antenna array in a small area. Considerable beam gain can be obtained to overcome the high propagation loss in the mmWave-band \cite{XiaoM2017survmmWave}.

Although more spectrum resources are available in the mmWave band, multiple access is still an important issue to increase the spectrum efficiency and the number of users/devices to support 5G Internet of Things (IoT). Non-orthogonal multiple access (NOMA), considered as another candidate technology for 5G, has drawn widespread attention in both academia and industry \cite{Ding2017mmWaveNOMA,Benjebbour2013ConceptNOMA,saito2013non,ding2014performance,Dai2015NOMA5G,Ding2015Cooperative,Ding2017survNOMA,Zhu2018optimaluserp}. Different from the conventional orthogonal multiple access (OMA) schemes, NOMA serves multiple users in a single resource block (time/frenquency/code) and distinguishes them in power domain. Successive interference cancellation (SIC) is required at the receivers. In general, the users are sorted by an increasing order of channel gains. The one with lower channel gain is prior, i.e., its signal is decoded and removed first with the signals of the other users treated as noise \cite{Benjebbour2013ConceptNOMA,saito2013non,ding2014performance,Dai2015NOMA5G,Ding2015Cooperative,Ding2017survNOMA}. In this way, NOMA can increase the spectrum efficiency and break the limit that the maximal number of users is no larger than the number of radio-frequency (RF) chains in OMA networks \cite{ding2014performance,Dai2015NOMA5G,Ding2015Cooperative,Ding2017random,Daill2017,roh2014millimeter,sun2014mimo}.

To make full use of the spectrum resource, we investigate NOMA in mmWave communications (mmWave-NOMA) in this paper. The combination of the two candidate technologies for 5G has been preliminarily explored in several literatures. In \cite{Ding2017random}, the coexistence of NOMA and mmWave communications was considered, where random beamforming was used in order to reduce the system overhead. The results demonstrated that the combination of NOMA and mmWave communications yields significant gains in terms of sum rates and outage probabilities, compared with the conventional mmWave-OMA systems. In \cite{Daill2017}, the new concept of beamspace multiple-input multiple-output NOMA (MIMO-NOMA) with a lens-array hybrid beamforming structure was proposed to use multi-beam to serve multiple NOMA users with arbitrary locations. With this method, the number of supported users can be larger than the number of RF chains in the same time-frequency resource block. Beamforming, user selection and power allocation were considered for mmWave-NOMA networks in \cite{Cui2018mmWaveNOMA}, where random beamforming was adopted first. Then a power allocation algorithm that leverages the branch and bound (BB) technique and a low complexity user selection algorithm based on matching theory were proposed. A NOMA based hybrid beamforming design was proposed in \cite{Wu2017hybridBF}, where a user pairing algorithm  was proposed first and then the hybrid beamforming and power allocation algorithm was proposed to maximize the sum achievable rate. In \cite{Zhang2017mmWaveMIMONOMA}, the NOMA-mmWave-massive-MIMO system model and a simplified mmWave channel model were proposed. Whereafter, theoretical analysis on the achievable rate was considered in both the noise-dominated low-SNR regime and the interference-dominated high-SNR regime. To further improve the data rate, power allocation and beamforming were jointly explored in \cite{xiao2018mmWaveNOMA} and \cite{Zhulip2018uplink} for a 2-user downlink and uplink mmWave-NOMA scenario, respectively, where the key technique is the multi-directional beamforming design with a constant-modulus (CM) phased array.

Different from these works \cite{Ding2017random,Cui2018mmWaveNOMA,Daill2017,Zhang2017mmWaveMIMONOMA,Wu2017hybridBF,xiao2018mmWaveNOMA}, we consider user fairness for downlink mmWave-NOMA networks in this paper. To improve the overall data rate, we maximize the minimal achievable rate among multiple users, i.e., we adopt the max-min fairness \footnote{We adopt the max-min fairness because it is a typical and extensively used fairness rule in NOMA \cite{FairnessNOMA2015,Ding2017survNOMA}. Besides the max-min fairness, there are also other fairness rules in NOMA, like proportional fairness, etc. \cite{Ding2017survNOMA}.}. Due to the requirement of low hardware cost and power consumption, an analog beamforming structure with a single RF chain is utilized, where both implementations of single phase shifter (SPS) and double phase shifter (DPS) are considered \cite{Bogale2016DPS,Lin2017DPS}. In the formulated problem, power allocation and beamforming are jointly optimized. As the problem is non-convex and the dimension of the optimization variables is large, it is difficult to solve this problem with the existing optimization tools. To this end, we solve this problem with two stages and obtain a sub-optimal solution. In the first stage, we obtain closed-form optimal power allocation with an arbitrary fixed beamforming vector, which reduces the joint optimization problem into an equivalent beamforming problem. Then, in the second stage, we propose an appropriate beamforming algorithm utilizing the spatial sparsity in the angle domain of the mmWave channel. Finally, we verify the performance of the proposed joint beamforming and power allocation method for user fairness mmWave-NOMA by simulations. The results show that the proposed solution can achieve a near-upper-bound performance in terms of achievable rate, which is significantly better than that of the conventional mmWave-OMA system.

The rest of the paper is organized as follows. In Section II, we present the system model and formulate the problem. In Section III, we propose the solution. In Section IV, simulation results are given to demonstrate the performance of the proposed solution, and the paper is concluded finally in Section V.

Symbol Notation: $a$ and $\mathbf{a}$ denote a scalar variable and a vector, respectively. $(\cdot)^{\rm{T}}$ and $(\cdot)^{\rm{H}}$ denote transpose and conjugate transpose, respectively. $|\cdot|$ and $\|\cdot\|$ denote the absolute value and Euclidean norm, respectively. $\mathbb{E}(\cdot)$ denotes the expectation operation. $[\mathbf{a}]_i$ denotes the $i$-th entry of $\mathbf{a}$. $\mathbb{C}^{N}$ denotes an $N$-dimension linear space in complex domain.

\section{System Model and Problem Formulation}
\subsection{System model}
\begin{figure}[t]
\begin{center}
  \includegraphics[width=\figwidth cm]{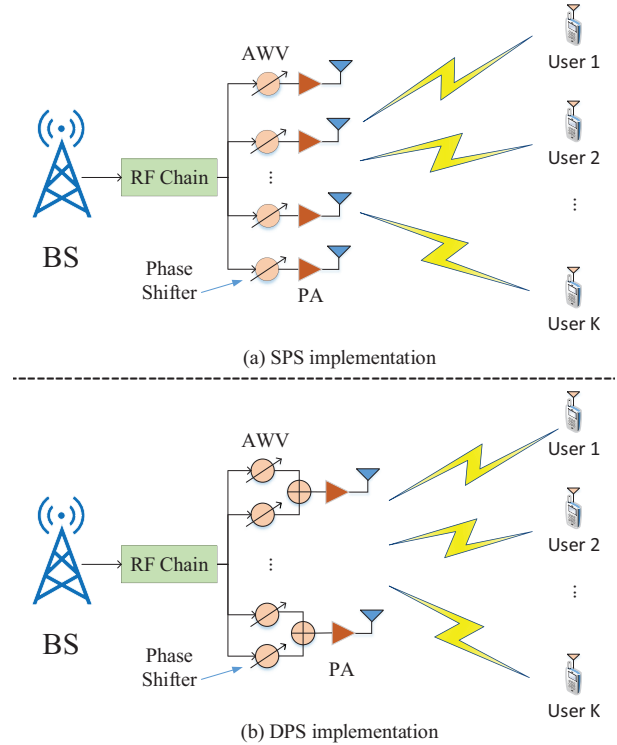}
  \caption{Illustration of a mmWave mobile cell, where one BS with $N$ antennas serves multiple users with one single antenna.}
  \label{fig:system}
\end{center}
\end{figure}
In this paper, we consider a downlink mmWave communications system. As shown in Fig. \ref{fig:system}, the base station (BS) is equipped with a single RF chain and an $N$-antenna phased array. $K$ users with a single antenna are served in the same resource block. Each antenna is driven by the power amplifier (PA) and phase shifter (PS).

The BS transmits a signal $s_k$ to User $k~(k=1,2,\cdots, K)$ with transmission power $p_k$, where $\mathbb{E}(\left | s_{k} \right |^{2})=1$. The total transmission power of the BS is $P$. The received signal for User $k$ is
\begin{equation}
y_k=\mathbf{h}_{k}^{\rm{H}}\mathbf{w}\sum \limits_{k=1}^{K}\sqrt{p_{k}}s_{k}+n_k,
\end{equation}
where $\mathbf{h}_{k}$ is the channel response vector between the BS and User $k$. $\mathbf{w}$ is the antenna weight vector (AWV), i.e., analog beamforming vector, and $n_k$ denotes the Gaussian white noise at User $k$ with power $\sigma^2$.

Two PS structures, named SPS implementation and DPS implementation, are considered. For the SPS implementation, each antenna branch has a single PS as shown in Fig. \ref{fig:system}(a). The elements of the AWV are complex numbers, whose modulus and phase are controlled by the PA and PS respectively. To reduce hardware complexity, all the PAs have the same scaling factor in general. Thus, the AWV has CM elements, which is denoted by
\begin{equation}\label{CM}
|[{\mathbf{w}}]_i|=\frac{1}{\sqrt{N}},~i=1,2,...,N.
\end{equation}

The above constraint is non-convex, which results in a major challenge of AWV design, i.e., we can only adjust the phase but not the amplitude of the signal. To reduce the design difficulty, a new implementation named DPS was proposed in \cite{Bogale2016DPS,Lin2017DPS}, which is shown in Fig. \ref{fig:system}(b). For the DPS implementation, each antenna is driven by the summation of the two independent PSs. Although the modulus of each PS is constant, the phases of two PSs can be adjusted to achieve different modulus in each antenna branch. Thus, the modulus constraint is relaxed to
\begin{equation}\label{DPS}
|[{\mathbf{w}}]_i| \leq \frac{2}{\sqrt{N}},~i=1,2,...,N.
\end{equation}
By doubling the number of PSs, the new constraint becomes convex and therefore make it more tractable to develop low-complexity design approaches.

The channel between the BS and User $k$ is a mmWave channel.\footnote{In this paper, we assume the channel state information (CSI)  is known by the BS. The mmWave channel estimation with low complexity can be referred in \cite{xiao2016codebook} and \cite{xiao2017codebook}.} Subject to the limited scattering in mmWave-band, multipath is mainly caused by reflection. As the number of the multipath components (MPCs) is small in general, the mmWave channel has directionality and appears spatial sparsity in the angle domain \cite{peng2015enhanced,wang2015multi,Lee2014exploiting,Gao2016ChannelEst,xiao2016codebook,alkhateeb2014channel}. Different MPCs have different angles of departure (AoDs). Without loss of generality, we adopt the directional mmWave channel model assuming a uniform linear array (ULA) with a half-wavelength antenna space. Then a mmWave channel can be expressed as \cite{peng2015enhanced,wang2015multi,Lee2014exploiting,Gao2016ChannelEst,xiao2016codebook,alkhateeb2014channel}
\begin{equation} \label{eq_oriChannel}
\mathbf{h}_{k}=\sum_{\ell=1}^{L_k}\lambda_{k,\ell}\mathbf{a}(N,\Omega_{k,\ell}).
\end{equation}
where $\lambda_{k,\ell}$, $\Omega_{k,\ell}$ are the complex coefficient and cos(AoD) of the $\ell$-th MPC of the channel vector for User $k$, respectively. We have $\sum \limits_{l=1}^{L_{k}}\mathbb{E}(|\lambda_{k,\ell}|^{2})\varpropto \frac{1}{d_{k}^{2}}$, where $d_{k}$ is the distance between the BS and User $k$.  $L_k$ is the total number of MPCs for User $k$, ${\bf{a}}(\cdot)$ is a steering vector function defined as
\begin{equation} \label{eq_steeringVCT}
\mathbf{a}(N,\Omega)=[e^{j\pi0\Omega},e^{j\pi1\Omega},e^{j\pi2\Omega},\cdot\cdot\cdot,e^{j\pi(N-1)\Omega}]^{\mathrm{T}},
\end{equation}
which depends on the array geometry. Let $\theta_{k,\ell}$ denote the real AoD of the $\ell$-th MPC for User $k$, then we have $\Omega_{k,\ell}=\cos(\theta_{k,\ell})$. Therefore, $\Omega_{k,\ell}$ is within the range $[-1, 1]$.

In general, the optimal decoding order of NOMA is the increasing order of the effective channel gains, i.e., $\left |\mathbf{h}_{k}^{\rm{H}}\mathbf{w} \right |^{2}$. However, we cannot determine the order of the effective channel gains before beamforming design. For simplicity, we utilize the increasing order of uses' channel gains as the decoding order. We will illustrate the rational of selecting the increasing-channel-gain decoding order in Section III-C, and verify that it can achieve near optimal performance by simulations. Without loss of generality, we assume  $\|\mathbf{h}_{1}\|\geq \|\mathbf{h}_{2}\|\geq \cdots \geq \|\mathbf{h}_{K}\|$. Therefore, User $k$ can decode $s_n~(k+1 \leq n \leq K)$ and then remove them from the received signal in a successive manner. The signals for User $m~(1\leq m \leq k-1)$ are treated as noise. Thus, the achievable rate of User $k$ is given by
\begin{equation}\label{eq_Rate}
R_{k}=\log_{2}(1+ \frac{\left |\mathbf{h}_{k}^{\rm{H}}\mathbf{w} \right |^{2}p_{k}}{\left |\mathbf{h}_{k}^{\rm{H}}\mathbf{w} \right |^{2}\sum \limits_{m=1}^{k-1}p_{m}+\sigma^{2}}).
\end{equation}

\subsection{Problem Formulation}
As aforementioned, both beamforming and power allocation have an important effect on the performance of the mmWave-NOMA system. To improve the overall data rate and guarantee user fairness, we formulate a problem to maximize the minimal achievable rate (the max-min fairness) among the $K$ users in this paper, where beamforming and power allocation are jointly optimized. The problem is formulated as
\begin{equation}\label{eq_problem}
\begin{aligned}
\mathop{\mathrm{Max}}\limits_{\{p_k\},\bf{w}}~ &\min\limits_{k}\{R_{k}\}\\
\mathrm{s.t.}~~~~ &C_1:~p_{k} \geq 0, ~~k=1,2,\cdots,K\\
&C_2:~\sum \limits_{k=1}^{K} p_{k} \leq P, \\
&C_3:~\|\bf{w}\|\leq 1,\\
&C_4:~|[{\mathbf{w}}]_i|=\frac{1}{\sqrt{N}} \text{ or } |[{\mathbf{w}}]_i| \leq \frac{2}{\sqrt{N}}, ~i=1,2,...,N
\end{aligned}
\end{equation}
where $R_k$ denotes the achievable rate of User $k$ as defined in \eqref{eq_Rate} and $\min\limits_{k}\{R_{k}\}$ is the minimal achievable rate among the $K$ served users. The constraint $C_1$ indicates that the power allocation to each user should be positive. $C_2$ is the transmission power constraint, where $P$ is the total transmission power. $C_3$ is the norm constraint on the AWV, and $C_4$ is the additional modulus constraint on the AWV for SPS or DPS implementation. The above problem is challenging, not only due to the non-convex formulation, but also due to that the variables to be optimized are entangled with each other. It is computationally prohibitive to directly search the optimal solution, because the dimension of the optimization variables is $N+K$, which is large in general. Next, we will propose a sub-optimal solution with promising performance but low computational complexity.

\section{Solution of the Problem}
As the modulus constraints for SPS and DPS implementations are different, we first solve the problem without considering the constraint $C_4$. As thus, Problem \eqref{eq_problem} is simplified as
\begin{equation}\label{eq_problem2}
\begin{aligned}
\mathop{\mathrm{Max}}\limits_{\{p_k\},\bf{w}}~ &\min\limits_{k}\{R_{k}\}\\
\mathrm{s.t.}~~~~ &C_1:~p_{k} \geq 0, ~~k=1,2,\cdots,K\\
&C_2:~\sum \limits_{k=1}^{K} p_{k} \leq P, \\
&C_3:~\|\bf{w}\|\leq 1.
\end{aligned}
\end{equation}
We will solve Problem \eqref{eq_problem2} first, and then particularly consider the modulus constraints in Section III-D.

Problem \eqref{eq_problem2} is still difficult due to the non-convex formulation, so we propose a sub-optimal solution with two stages. In the first stage, we obtain the closed-form optimal power allocation with an arbitrary fixed AWV. Then, in the second stage, we propose an appropriate beamforming algorithm utilizing the angle-domain spatial sparsity of the mmWave channel.
\subsection{Optimal Power Allocation with an Arbitrary Fixed AWV}
First, we introduce a variable to simplify Problem \eqref{eq_problem2}. Denote the minimal achievable rate among the $K$ users as $r$. Then Problem \eqref{eq_problem2} can be re-written as
\begin{equation}\label{eq_problem3}
\begin{aligned}
\mathop{\mathrm{Max}}\limits_{\{p_k\},\mathbf{w},r}~ &r\\
\mathrm{s.t.}~~~~ &C_0:~R_{k} \geq r, ~~k=1,2,\cdots,K\\
&C_1:~p_{k} \geq 0, ~~k=1,2,\cdots,K\\
&C_2:~\sum \limits_{k=1}^{K} p_{k} \leq P, \\
&C_3:~\|\bf{w}\|\leq 1,
\end{aligned}
\end{equation}
where the constraints $C_0~:~R_{k} \geq r,~(k=1,2,\cdots,K)$ are necessary and sufficient conditions of the fact that $r$ is the minimal achievable rate among the served users. On one hand, as $r$ is the minimal rate, the achievable rate of each user should be no less than $r$. On the other hand, there is at least one user, whose achievable rate $R_{k_{m}}$ is equal to $r$; otherwise we can always improve $r$ to minish the gap between $R_{k_{m}}$ and $r$.

We give the following Theorem to obtain the optimal solution of power allocation of Problem \eqref{eq_problem3} with an arbitrary fixed AWV.
\begin{theorem} Given an arbitrary fixed $\mathbf{w_{0}}$, the optimal power allocation of Problem \eqref{eq_problem3} is
\begin{equation}\label{power_criterion}
\left\{\begin{aligned}
&p_{1}=\eta\frac{\sigma^2}{\left |\mathbf{h}_{1}^{\rm{H}}\mathbf{w}_{0} \right |^{2}},\\
&p_{2}=\eta(p_{1}+\frac{\sigma^2}{\left |\mathbf{h}_{2}^{\rm{H}}\mathbf{w}_{0} \right |^{2}}),\\
&~~~~\vdots\\
&p_{K}=\eta(\sum \limits_{m=1}^{K-1} p_{m}+\frac{\sigma^2}{\left |\mathbf{h}_{K}^{\rm{H}}\mathbf{w}_{0} \right |^{2}}),
\end{aligned}\right.
\end{equation}
where $\eta=2^{r}-1$, and with the optimal power allocation, $R_{k}=r~(k=1,2,\cdots,K)$.
\end{theorem}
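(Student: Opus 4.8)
The plan is to decouple Problem~\eqref{eq_problem3} for a \emph{fixed} AWV $\mathbf{w}_0$ into a one-dimensional search over the target rate $r$, and to show that for each $r$ the \emph{least} total power needed to meet all the rate constraints is exactly the one produced by the cascade in~\eqref{power_criterion}. Write $g_k := |\mathbf{h}_k^{\rm H}\mathbf{w}_0|^2$ (constants, since $\mathbf{w}_0$ is fixed) and $\eta := 2^r-1$. Using~\eqref{eq_Rate} and the monotonicity of $\log_2$, the constraint $C_0:~R_k\ge r$ is equivalent to
\[
p_k \;\ge\; \eta\Bigl(\sum_{m=1}^{k-1}p_m+\frac{\sigma^2}{g_k}\Bigr),\qquad k=1,2,\dots,K .
\]
Hence, for a fixed $r$, Problem~\eqref{eq_problem3} is feasible if and only if there exist nonnegative $p_k$ satisfying these $K$ inequalities together with $\sum_{k=1}^K p_k\le P$.

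Next I would introduce $p_k^\star$ as the right-hand side of~\eqref{power_criterion}, i.e.\ the values obtained when every inequality above holds with equality, and let $S_k^\star:=\sum_{m=1}^k p_m^\star$. An induction on $k$ shows that any feasible $\{p_k\}$ has partial sums $S_k:=\sum_{m=1}^k p_m \ge S_k^\star$: the base case is $p_1\ge \eta\sigma^2/g_1 = p_1^\star$, and the step uses $S_k = S_{k-1}+p_k \ge (1+\eta)S_{k-1}+\eta\sigma^2/g_k \ge (1+\eta)S_{k-1}^\star+\eta\sigma^2/g_k = S_k^\star$, valid because $1+\eta>0$. Moreover, equality propagates only if $S_{k-1}=S_{k-1}^\star$ \emph{and} the $k$-th inequality is tight, so $S_K=S_K^\star$ forces $p_k=p_k^\star$ for all $k$, and in that case $R_k=r$ for every $k$. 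Therefore Problem~\eqref{eq_problem3} is feasible for a given $r$ exactly when $S_K^\star(\eta)\le P$.

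Finally, unrolling the cascade gives $S_K^\star(\eta)=\sum_{k=1}^K \eta\,(\sigma^2/g_k)(1+\eta)^{K-k}$, a strictly increasing continuous function of $\eta$ on $[0,\infty)$ with $S_K^\star(0)=0$ and $S_K^\star(\eta)\to\infty$; so the supremum of feasible rates is attained at the unique $\eta^\star$ with $S_K^\star(\eta^\star)=P$, yielding the optimal value $r^\star=\log_2(1+\eta^\star)$. At this optimum $P\ge \sum_k p_k \ge S_K^\star(\eta^\star)=P$, so every inequality is tight; that is precisely~\eqref{power_criterion} with $\eta=2^{r^\star}-1$, and it gives $R_k=r^\star$ for all $k$. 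The main obstacle I anticipate is the inductive monotonicity step and, in particular, its equality case: it is what simultaneously identifies~\eqref{power_criterion} as the minimum-power way to reach a target rate, forces the power budget to be exhausted at optimality, and pins down that all users' achievable rates coincide. The remaining pieces — the rewriting of $C_0$ and the continuity/strict monotonicity of $S_K^\star(\eta)$ that yields a unique optimal $\eta^\star$ — are routine.
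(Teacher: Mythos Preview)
Your proof is correct and follows essentially the same approach as the paper: both rewrite $C_0$ as $p_k \ge \eta\bigl(\sum_{m<k}p_m+\sigma^2/g_k\bigr)$, use the same induction to show the cascade in~\eqref{power_criterion} minimizes total power for a given target rate (the paper phrases it as $p_k^\circ \le p_k^\star$, you phrase it as $S_k \ge S_k^\star$), and then invoke the strict monotonicity of the cascade's total power in $\eta$ (the paper's Lemma~1) to pin down the optimal $\eta^\star$. The only organizational difference is that the paper starts from an assumed optimum $\{p_k^\star,r^\star\}$ and uses a perturbation-contradiction argument to force all rates equal, whereas you obtain this more directly from the equality case of your partial-sum induction combined with $S_K^\star(\eta^\star)=P$.
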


Before proving Theorem 1, we give Lemma 1 for the summation of the optimal power allocation in \eqref{power_criterion}, which is a function of $\eta$.

\begin{lemma} The summation of power allocation in \eqref{power_criterion} is
\begin{equation}\label{sum_power}
g(\eta)\triangleq \sum \limits_{k=1}^{K} p_{k}=\sum \limits_{k=1}^{K} \frac{\eta(1+\eta)^{K-k}\sigma^2}{\left |\mathbf{h}_{K}^{\rm{H}}\mathbf{w}_{0} \right |^{2}}.
\end{equation}
\end{lemma}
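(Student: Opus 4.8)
The plan is to track the running partial sums of the allocated powers. For $k=0,1,\dots,K$ set $S_k \triangleq \sum_{m=1}^{k} p_m$ (with $S_0=0$), so that the quantity to be evaluated is exactly $g(\eta)=S_K$. The first step is to recast the recursive description \eqref{power_criterion} as a one-step recurrence for $S_k$: for $k\ge 2$, the $k$-th line of \eqref{power_criterion} reads $p_k=\eta\bigl(S_{k-1}+\sigma^{2}/\left|\mathbf{h}_k^{\rm H}\mathbf{w}_0\right|^{2}\bigr)$, and adding $S_{k-1}$ to both sides gives
\[
S_k=(1+\eta)\,S_{k-1}+\frac{\eta\,\sigma^{2}}{\left|\mathbf{h}_k^{\rm H}\mathbf{w}_0\right|^{2}} ,
\]
an identity that also reproduces the first line of \eqref{power_criterion} at $k=1$ thanks to $S_0=0$.

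Next I would solve this first-order linear recurrence in closed form, claiming that for every $k=1,\dots,K$
\[
S_k=\sum_{m=1}^{k}\frac{\eta\,(1+\eta)^{k-m}\,\sigma^{2}}{\left|\mathbf{h}_m^{\rm H}\mathbf{w}_0\right|^{2}} .
\]
There are two equivalent routes: either divide the recurrence by $(1+\eta)^{k}$, notice that the left-hand side becomes the forward difference of $S_k/(1+\eta)^{k}$, telescope the sum from $1$ to $k$, and multiply back by $(1+\eta)^{k}$; or prove the closed form by induction on $k$ — the base case $k=1$ is immediate, and in the inductive step one substitutes the formula for $S_{k-1}$, absorbs the factor $1+\eta$ into the sum so that each exponent $k-1-m$ becomes $k-m$, and appends the new $m=k$ term. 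Specializing to $k=K$ then yields \eqref{sum_power} and completes the proof.

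I do not expect a genuine obstacle here: the argument is just an elementary manipulation of a geometric recursion, and the only point demanding a little care is index bookkeeping — keeping the denominator $\left|\mathbf{h}_m^{\rm H}\mathbf{w}_0\right|^{2}$ attached to the correct ($m$-th) summand throughout (consistent with the $m$-th line of \eqref{power_criterion}), which is the one place a transcription slip would silently propagate.
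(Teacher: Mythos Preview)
Your proposal is correct and the induction route you sketch is exactly the paper's proof: the paper verifies the base case, assumes $\sum_{k=1}^{n} p_k=\sum_{k=1}^{n}\eta(1+\eta)^{n-k}\sigma^2/|\mathbf{h}_k^{\rm H}\mathbf{w}_0|^2$, and obtains the $n{+}1$ case via the same identity $S_{n+1}=(1+\eta)S_n+\eta\sigma^2/|\mathbf{h}_{n+1}^{\rm H}\mathbf{w}_0|^2$ that you derive. Your caution about the denominator index is apt --- the $\mathbf{h}_K$ in the displayed statement is a typo for $\mathbf{h}_k$, as the paper's own proof and its subsequent use of the formula confirm.
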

\begin{proof}
We prove Lemma 1 with mathematical induction.

When $m=1$, \eqref{sum_power} is easy to verify
\begin{equation}
p_{1}=\eta\frac{\sigma^2}{\left |\mathbf{h}_{1}^{\rm{H}}\mathbf{w}_{0} \right |^{2}}.
\end{equation}

When $m=n~(n\geq 1)$, assume that
\begin{equation}\label{k=n}
\sum \limits_{k=1}^{n} p_{k}=\sum \limits_{k=1}^{n} \frac{\eta(1+\eta)^{n-k}\sigma^2}{\left |\mathbf{h}_{k}^{\rm{H}}\mathbf{w}_{0} \right |^{2}}.
\end{equation}

When $m=n+1$, based on \eqref{k=n}, we have
\begin{equation}
\begin{aligned}
&\sum \limits_{k=1}^{n+1} p_{k}\\
=&\sum \limits_{k=1}^{n} p_{k}+\eta(\sum \limits_{k=1}^{n} p_{k}+\frac{\sigma^2}{\left |\mathbf{h}_{n}^{\rm{H}}\mathbf{w}_{0} \right |^{2}})\\
=&(1+\eta)\sum \limits_{k=1}^{n} p_{k}+\eta\frac{\sigma^2}{\left |\mathbf{h}_{n}^{\rm{H}}\mathbf{w}_{0} \right |^{2}}\\
=&(1+\eta)\sum \limits_{k=1}^{n} \frac{\eta(1+\eta)^{n-k}\sigma^2}{\left |\mathbf{h}_{k}^{\rm{H}}\mathbf{w}_{0} \right |^{2}}+\eta\frac{\sigma^2}{\left |\mathbf{h}_{n}^{\rm{H}}\mathbf{w}_{0} \right |^{2}}\\
=&\sum \limits_{k=1}^{n+1} \frac{\eta(1+\eta)^{n+1-k}\sigma^2}{\left |\mathbf{h}_{k}^{\rm{H}}\mathbf{w}_{0} \right |^{2}}.
\end{aligned}
\end{equation}

Finally, we can conclude that \eqref{sum_power} is true.
\end{proof}

Based on Lemma 1, the proof of Theorem 1 is presented in Appendix A. According to Theorem 1 and Lemma 1, Problem \eqref{eq_problem3} can be equivalently written as
\begin{equation}\label{beamforming}
\begin{aligned}
\mathop{\mathrm{Max}}\limits_{\mathbf{w},\eta}~~~ &\eta\\
\mathrm{s.t.}~~~~ &\sum \limits_{k=1}^{K} p_{k}=\sum \limits_{k=1}^{K} \frac{\eta(1+\eta)^{K-k}\sigma^2}{\left |\mathbf{h}_{k}^{\rm{H}}\mathbf{w} \right |^{2}}\leq P,\\
&\|\bf{w}\|\leq 1,
\end{aligned}
\end{equation}
where $\eta=2^r-1$.

Hereto, the first stage to solve Problem \eqref{eq_problem2} is finished, where the optimal power allocation is obtained, and thus the original problem with entangled power allocation and beamforming is reduced to a pure beamforming problem as shown in \eqref{beamforming}, which will be solved in the next subsection.

\subsection{Beamforming Design with Optimal Power Allocation}
The remaining task is to to solve Problem \eqref{beamforming} and obtain $\bf{w}$; then the closed-form expression of $\{p_{k}~(k=1,2,\cdots ,K)\}$ can be obtained by \eqref{power_criterion}. The main challenge is that the first constraint is non-convex, where $\bf{w}$ and $\eta$ are entangled. As the dimension of $\bf{w}$, i.e., $N$, is large in general, it is computationally prohibitive to directly search the optimal solution. However, the introduced variable $\eta$ is only 1-dimensional. We can search the maximal value of $\eta$ in the range of $[0,\Gamma]$ with the bisection method, where $\Gamma$ is the search upper bound. According to the definition of $\eta=2^r-1$, $\eta$ in fact represents the minimal signal to interference plus noise power ratio (SINR) among the $K$ users. If we allocate all the beam gain and power to the user with the best channel condition, i.e., User 1, then User 1 can achieve the highest SINR $\Gamma=(\sum \limits_{n=1}^{N}|[\mathbf{h}_{1}]_{n}|)^2P/(N\sigma^2)$. Thus, we select $\Gamma$ as the search upper bound. Given a fixed $\eta$, we judge whether an appropriate $\mathbf{w}$ can be found in the feasible region of Problem \eqref{beamforming}. Thus, we need to solve the following problem
\begin{equation}\label{beamforming2}
\begin{aligned}
\mathop{\mathrm{Min}}\limits_{\mathbf{w}}~~~~ &f(\mathbf{w})\triangleq \sum\limits_{k=1}^{K} \frac{\eta(1+\eta)^{K-k}\sigma^2}{\left |\mathbf{h}_{k}^{\rm{H}}\mathbf{w} \right |^{2}}\\
\mathrm{s.t.}~~~~ &\|\bf{w}\|\leq 1.
\end{aligned}
\end{equation}
Given $\eta$, if the minimal value of the objective function in Problem \eqref{beamforming2} is no larger than $P$, which means that a feasible solution can be found with the given $\eta$, we enlarge $\eta$ and solve Problem \eqref{beamforming2} again. If the minimal value of the objective function in Problem \eqref{beamforming2} is larger than $P$, i.e., a feasible solution cannot be found with the given $\eta$, we lessen $\eta$ and solve Problem \eqref{beamforming2} again. The stopping criterion of the bisection search is that $\eta$ meets an accuracy requirement.

To solve Problem \eqref{beamforming2}, some approximate manipulations are required to simplify the beamforming problem. Retrospecting the characteristic of the mmWave channel, the channel response vectors of different users are approximatively orthogonal due to the spatial sparsity in the angle domain, which is
\begin{equation}\label{orthogonal}
\frac{\mathbf{h}_{m}^{\rm{H}}}{\|\mathbf{h}_{m}^{\rm{H}}\|}\frac{\mathbf{h}_{n}}{\|\mathbf{h}_{n}\|}\approx
\left\{\begin{aligned}
&1,~\text{If}~m=n;\\
&0,~\text{If}~m\neq n.
\end{aligned}\right.
\end{equation}

With this approximation, $\{\frac{\mathbf{h}_{k}}{\|\mathbf{h}_{k}\|},~k=1,2,\cdots,K\}$ can be considered as an orthonormal basis of a subspace in $\mathbb{C}^{N}$. We say the subspace expanded by $\{\frac{\mathbf{h}_{k}}{\|\mathbf{h}_{k}\|},~k=1,2,\cdots,K\}$ is a \emph{channel space}. In Problem \eqref{beamforming2}, most beam gains are inclined to focus on the users' directions. Thus, the AWV should be located in the channel space, which can be written as
\begin{equation}\label{coordinates}
\mathbf{w}=\sum \limits_{k=1}^{K} \alpha_{k}\frac{\mathbf{h}_{k}}{\|\mathbf{h}_{k}\|},
\end{equation}
where $\{\alpha_{k},~k=1,2,\cdots,K\}$ are the coordinates of $\mathbf{w}$ in the channel space. Substituting \eqref{coordinates} into Problem \eqref{beamforming2}, we have
\begin{equation}\label{beamforming3}
\begin{aligned}
\mathop{\mathrm{Min}}\limits_{\{\alpha_{k}\}}~~~~ &\sum \limits_{k=1}^{K} \frac{\eta(1+\eta)^{K-k}\sigma^2}{\alpha_{k}^{2}\|\mathbf{h}_{k}\|^{2}}\\
\mathrm{s.t.}~~~~ &\sum \limits_{k=1}^{K} \alpha_{k}^{2}= 1.
\end{aligned}
\end{equation}

Note that the norm constraint for $\|\bf{w}\|\leq 1$ is replaced by $\|\bf{w}\|= 1$ here, because the norm of optimal $\bf{w}$ is surely 1. Assuming that $\bf{w}^{\star}$ is optimal and $\|\bf{w}^{\star}\|< 1$, we can always normalize the AWV to get a better solution of $\frac{\bf{w}^{\star}}{\|\bf{w}^{\star}\|}$.

To solve Problem \eqref{beamforming3}, we define the Lagrange function as
\begin{equation}\label{Lagrange}
L(\alpha,\lambda)=\sum \limits_{k=1}^{K} \frac{\eta(1+\eta)^{K-k}\sigma^2}{\alpha_{k}^{2}\|\mathbf{h}_{k}\|^{2}}+\lambda(\sum \limits_{k=1}^{K} \alpha_{k}^{2}-1).
\end{equation}

The Karush-Kuhn-Tucker (KKT) conditions can be obtained by the following equation \cite{boyd2004convex},
\begin{equation}\label{KKT}
\left\{\begin{aligned}
&\frac{\partial L}{\partial \alpha_{k}}=0, ~k=1,2,\cdots,K\\
&\frac{\partial L}{\partial \lambda}=0.
\end{aligned}\right.
\end{equation}

From the KKT conditions, we can obtain the solution of Problem \eqref{beamforming3}, which is given by
\begin{equation}
\begin{aligned}
&\frac{\partial L}{\partial \alpha_{k}}=0\\
\Rightarrow &\frac{-2\eta(1+\eta)^{K-k}\sigma^2}{\alpha_{k}^{3}\|\mathbf{h}_{k}\|^{2}}+2\lambda\alpha_{k}=0\\
\Rightarrow &\alpha_{k}=\sqrt[4]{\frac{\eta(1+\eta)^{K-k}\sigma^2}{\lambda\|\mathbf{h}_{k}\|^{2}}}\\
\Rightarrow &\alpha_{k} \propto \sqrt[4]{\frac{\eta(1+\eta)^{K-k}}{\|\mathbf{h}_{k}\|^{2}}}.
\end{aligned}
\end{equation}

Thus, the designed AWV in Problem \eqref{beamforming2} is given by
\begin{equation}\label{BF_vector}
\left\{\begin{aligned}
&\mathbf{\bar{w}}=\sum \limits_{k=1}^{K} \sqrt[4]{\frac{\eta(1+\eta)^{K-k}}{\|\mathbf{h}_{k}\|^{2}}}\frac{\mathbf{h}_{k}}{\|\mathbf{h}_{k}\|},\\
&\mathbf{w}=\frac{\mathbf{\bar{w}}}{\|\mathbf{\bar{w}}\|}.
\end{aligned}\right.
\end{equation}

In summary, we give Algorithm 1 to solve Problem \eqref{beamforming}.
\begin{algorithm}[h]
\caption{AWV design}
\label{alg1}
\begin{algorithmic}[1]
\REQUIRE ~\\
Channel response vectors: $\mathbf{h}_{k}, ~k=1,2,\cdots,K$;\\
Total transmission power: $P$;\\
Noise power: $\sigma^2$;\\
The search accuracy $\epsilon$.\\
\ENSURE ~\\
$\eta$ and $\mathbf{w}$.\\
\STATE $\eta_{\min}=0,~\eta_{\max}=\Gamma$.
\WHILE {$\eta_{\max}-\eta_{\min}>\epsilon$}
\STATE $\eta=(\eta_{\max}+\eta_{\min})/2$;
\STATE Calculate $\mathbf{w}$ according to \eqref{BF_vector} and the objective function in Problem \eqref{beamforming2}: $f(\mathbf{w})$.
\IF{$f(\mathbf{w})>P$}
\STATE $\eta_{\max}=\eta$.
\ELSE
\STATE $\eta_{\min}=\eta$.
\ENDIF
\ENDWHILE
\RETURN $\eta$ and $\mathbf{w}$.
\end{algorithmic}
\end{algorithm}

Hereto, we have solved Problem \eqref{eq_problem2} and obtain the solution $\{p_{k}^{\star}, \mathbf{w}\}$, where the AWV is obtained in Algorithm 1 and the power allocation is given in \eqref{power_criterion}. The AWV is approximately optimal while the power allocation is optimal for the designed AWV. A leftover problem is to verify the rational of the decoding order. We will consider this problem next.

\subsection{Decoding order}
When formulating Problem \eqref{eq_problem}, we assumed that the decoding order of signals is the increasing order of the channel gains. Next, we will verify that the order of the effective channel gains after beamforming design is the same with the channel-gain order. The effective channel gain for User $k$ is
\begin{equation}\label{beam_gain}
\begin{aligned}
&|\mathbf{h}_{k}^{\rm{H}}\mathbf{w}|^{2} \propto |\mathbf{h}_{k}^{\rm{H}}\mathbf{\bar{w}}|^{2}\\
=&\Bigg{|}\sum \limits_{m=1}^{K} \sqrt[4]{\frac{\eta(1+\eta)^{K-m}}{\|\mathbf{h}_{m}\|^{2}}}\frac{\mathbf{h}_{k}^{\rm{H}}\mathbf{h}_{m}}{\|\mathbf{h}_{m}\|}\Bigg{|}^{2}\\
\substack{{(a)}\\=}&\Bigg{|} \sqrt[4]{\frac{\eta(1+\eta)^{K-k}}{\|\mathbf{h}_{k}\|^{2}}}\frac{\mathbf{h}_{k}^{\rm{H}}\mathbf{h}_{k}}{\|\mathbf{h}_{k}\|}\Bigg{|}^{2}\\
=&\sqrt{\eta(1+\eta)^{K-k}}\|\mathbf{h}_{k}\|,
\end{aligned}
\end{equation}
where $(a)$ is according to the orthogonal assumption of the channel response vectors. As $\eta=2^{r}-1>0$, $\sqrt{\eta(1+\eta)^{K-k}}$ is decreasing for $k$. We have assumed that the order of the users' channel gains is $\|\mathbf{h}_{1}\|\geq \|\mathbf{h}_{2}\|\geq \cdots \geq \|\mathbf{h}_{K}\|$. Thus, under the orthogonal assumption of the channel response vectors, the order of users' effective channel gains is
\begin{equation}\label{beam_gain_order}
|\mathbf{h}_{1}^{\rm{H}}\mathbf{w}|^{2}\geq |\mathbf{h}_{2}^{\rm{H}}\mathbf{w}|^{2}\geq \cdots |\mathbf{h}_{K}^{\rm{H}}\mathbf{w}|^{2}.
\end{equation}

As shown in \eqref{beam_gain_order}, the order of the effective channel gains is the same with that of channel gains. However, this property may not hold if we utilize other decoding orders, which indicates that the increasing-channel-gain decoding order is more reasonable. In the simulations, we will compare the performance of different decoding orders and find that the performance of increasing-channel-gain decoding order is very close to the performance of the optimal decoding order.

\subsection{Consideration of Modulus Constraints}
When solving Problem \eqref{eq_problem2}, the additional modulus constraints on the AWV were not considered. Next, we will consider the modulus constraints and solve the original problem, i.e., Problem \eqref{eq_problem}. As we have shown in the system model, the modulus constraints on the elements of the AWV are \eqref{CM} and \eqref{DPS} for SPS and DPS implementations, respectively. Some additional normalized operations on the designed AWV are required to satisfy the constraints. For the SPS implementation, the constant modulus normalization is given by
\begin{equation}\label{CM_normalization}
[\mathbf{w}_{S}]_{i}=\frac{[\mathbf{w}]_{i}}{\sqrt{N}\big{|}[\mathbf{w}]_{i}\big{|}}, ~i=1,2,\cdots,N.
\end{equation}
where $\mathbf{w}_{S}$ denotes the AWV for SPS implementation. For the DPS implementation, the modulus normalization is given by
\begin{equation}\label{DPS_normalization}
[\mathbf{w}_{D}]_{i}=
\left\{\begin{aligned}
&[\mathbf{w}]_{i}, ~\text{If}~\big{|}[\mathbf{w}]_{i}\big{|}\leq \frac{2}{\sqrt{N}};\\
&\frac{2}{\sqrt{N}}, ~\text{If}~\big{|}[\mathbf{w}]_{i}\big{|}>\frac{2}{\sqrt{N}}.
\end{aligned}\right.
\end{equation}
where $\mathbf{w}_{D}$ denotes the AWV for DPS implementation. Each element of $\mathbf{w}_{D}$ is the sum weight of the corresponding antenna branch, and it needs to be decomposed into two components, which can be expressed as
\begin{equation}
[\mathbf{w}_{D}]_{i}\triangleq a_{i}e^{j\theta_{i}}=\frac{1}{\sqrt{N}}e^{j(\theta_{i}+\varphi_{i})}+\frac{1}{\sqrt{N}}e^{j(\theta_{i}-\varphi_{i})},
\end{equation}
where $a_{i}\in [0,\frac{2}{\sqrt{N}}]$ and $\theta_{i}\in [0,2\pi)$ are the modulus and the phase of $[{\mathbf{w}}_D]_{i}$, respectively, and $\varphi_i=\arccos(\frac{\sqrt{N}a_i}{2})$. Thus, the weights of the two PSs corresponding to $[\mathbf{w}_{D}]_{i}$ are
\begin{equation}
\left\{\begin{aligned}
&[\mathbf{\tilde{w}}_{D}]_{2i-1}=\frac{1}{\sqrt{N}}e^{j(\theta_{i}+\varphi_{i})},\\
&[\mathbf{\tilde{w}}_{D}]_{2i}=\frac{1}{\sqrt{N}}e^{j(\theta_{i}-\varphi_{i})}.
\end{aligned}\right.
\end{equation}

\subsection{Computational Complexity}
As we obtained the closed-form optimal power allocation with an arbitrary fixed AWV, the computational complexity is mainly caused by the beamforming algorithm in the second stage. In Algorithm 1, the total search time for $\eta$ is $T=\log_{2}(\frac{\Gamma}{\epsilon})$, where $\Gamma$ is the search upper bound and  $\epsilon$ is the search accuracy. Thus, the computational complexity of the proposed method is $\mathcal{O}(T)$, which does not increase as $N$ and $K$. However, if we directly search the solution of Problem \eqref{eq_problem} and obtain the globally optimal solution, the total complexity is $\mathcal{O}((\frac{1}{\epsilon})^{N+K})$, which exponentially increases as $N$ and $K$.

\section{Performance Simulations}
In this section, we provide simulation results to verify the performance of the proposed joint beamforming and power allocation method in the mmWave-NOMA system. We adopt the channel model in \eqref{eq_oriChannel} in the simulations, where the users are uniformly distributed from 10m to 500m away from the BS, and the channel gain of the user 100m away from the BS has an average power of 0dB. The number of MPCs for all the users are $L=4$. Both LOS and NLOS channel models are considered. For the LOS channel, the average power of the NLOS paths is 15 dB weaker than that of the LOS path. For the NLOS channel, the coefficient of each path has an average power of $1/\sqrt{L}$. The search accuracy in Algorithm 1 is $\epsilon=10^{-6}$.

\begin{figure}[t]
\begin{center}
  \includegraphics[width=\figwidth cm]{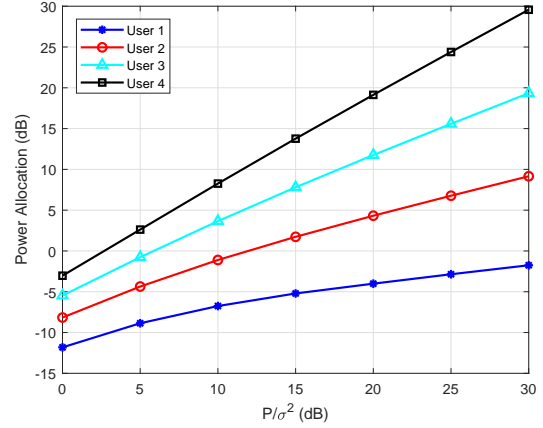}
  \caption{Power allocation with varying total power to noise ratio, where $N=32$ and $K=4$.}
  \label{fig:Beam_gain_P}
\end{center}
\end{figure}
\begin{figure}[t]
\begin{center}
  \includegraphics[width=\figwidth cm]{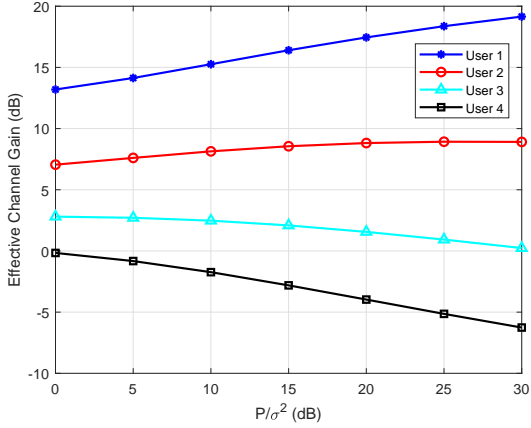}
  \caption{Effective channel gains with varying total power to noise ratio, where $N=32$ and $K=4$.}
  \label{fig:Power_allocation_P}
\end{center}
\end{figure}
We first show the power allocation and the effective channel gains in Figs. \ref{fig:Beam_gain_P} and \ref{fig:Power_allocation_P}, respectively, where the LOS channel model is adopted \footnote{Similar results can be observed when the NLOS channel model is adopted; thus the results are not presented here for conciseness.}. Each point is an average result from $10^4$ channel realizations. From Fig. \ref{fig:Beam_gain_P} we can find that most power is allocated to User 4, the user with the lowest channel gain. Less power is allocated to the users with higher channel gains, so as to reduce interference. Despite all this, it can be observed from Fig. \ref{fig:Power_allocation_P} that the effective channel gain of User 4 is still the lowest. The user with a better channel gain have a higher effective channel gain with the proposed solution, which verifies the conclusion in Section III-C about the decoding order. It is noteworthy that the effective channel gains of User 1 and User 4 go increasing and decreasing, respectively, when $P/\sigma^2$ becomes higher, which is the result of joint power allocation and beamforming. It indicates that when the total power is high, power and beam gain should be jointly allocated to enlarge the difference of the effective channel gains to achieve a larger minimal user rate.

Next, we compare the performance between the considered mmWave-NOMA system and a mmWave-OMA system. We give the following method to calculate the minimal achievable rates in a $K$-user mmWave-OMA system, where time division multiple access (TDMA) is used without of generality.

If all the time slots are allocated to User $k$, the achievable rate for User $k$ is
\begin{equation}
\bar{R}_{k}=\log_{2}(1+ \frac{\left |\mathbf{h}_{k}^{\rm{H}}\mathbf{w} \right |^{2}P}{\sigma^{2}}).
\end{equation}

Assume that the time division is ideal, which means that the time slot can be allocated to the users with any proportion. To maximize the minimal achievable rate of the $K$ users, more time should be allocated to the users with lower channel gains, such that the achievable rates of the $K$ users are equal. Thus, the time allocation for User $k$ is
\begin{equation}\label{OMA_allo}
\beta_{k}=\frac{1/\bar{R}_{k}}{\sum \limits_{m=1}^{K} 1/\bar{R}_{m}}.
\end{equation}

Then the achievable rate of User $k$ in the mmWave-OMA system is
\begin{equation}\label{OMA_Rate}
R_{k}^{\text{OMA}}=\beta_{k}\bar{R}_{k}=\frac{1}{\sum \limits_{m=1}^{K} 1/\bar{R}_{m}},
\end{equation}
where all the users have the same achievable rate.

\begin{figure}[t]
\begin{center}
  \includegraphics[width=\figwidth cm]{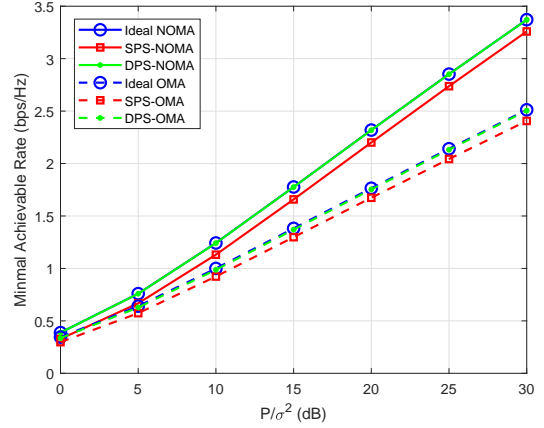}
  \caption{Comparison of the minimal achievable rates between NOMA and OMA system with varying total power to noise ratio, where $N=32$ and $K=4$.}
  \label{fig:Rate_P}
\end{center}
\end{figure}

Fig. \ref{fig:Rate_P} shows the comparison result of the minimal achievable rates between the mmWave-NOMA and mmWave-OMA systems with varying total power to noise ratio. The minimal achievable rates of Ideal NOMA/OMA, SPS-NOMA/SPS-OMA and DPS-NOMA/DPS-OMA are based on the beamforming given in \eqref{BF_vector}, \eqref{CM_normalization} and \eqref{DPS_normalization}, which are corresponding to the beamforming without CM constraint, with SPS implementation and with DPS implementation, respectively. Each point in the figure is the average performance of $10^4$ LOS channel realizations. We can find that the minimal achievable rates of SPS-NOMA are lower than that of DPS-NOMA, which is very close to the minimal achievable rates of Ideal NOMA, this is because the strict modulus normalization on the AWV for SPS results in significant performance loss, while the modulus normalization on the AWV for DPS is more relaxed and has little impact on the rate performance. In addition, the minimal achievable rates of the mmWave-NOMA system is distinctly better than those of the  mmWave-OMA system for all the cases, and superiority is more significant when the total power to noise ratio is higher.

\begin{figure}[t]
\begin{center}
  \includegraphics[width=\figwidth cm]{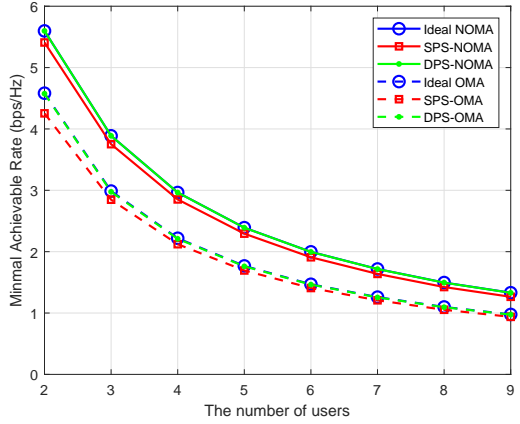}
  \caption{Comparison of the minimal achievable rates between the NOMA and OMA systems with varying number of users, where $N=32$ and the average transmission power to noise for each user is 20 dB.}
  \label{fig:Rate_K}
\end{center}
\end{figure}

Fig. \ref{fig:Rate_K} compares the minimal achievable rates between mmWave-NOMA and mmWave-OMA systems with varying number of users. For fairness, the total transmission power is proportional to the number of users, and the average transmission power to noise for each user is 20 dB. Each point in Fig. \ref{fig:Rate_K} is the average performance of $10^4$ LOS channel realizations. It can be observed again that the minimal achievable rate of mmWave-NOMA is better than that of mmWave-OMA for both SPS and DPS implementations, and the minimal achievable rates of DPS-NOMA is very close to that of Ideal NOMA. On the other hand, the minimal achievable rates of both mmWave-NOMA and mmWave-OMA decreases as the number of users increases. This is mainly due to that the orthogonality of the channel vectors of the users become weakened, which deteriorates the beamforming performance and in turn the minimal achievable rate performance.

\begin{figure}[t]
\begin{center}
  \includegraphics[width=\figwidth cm]{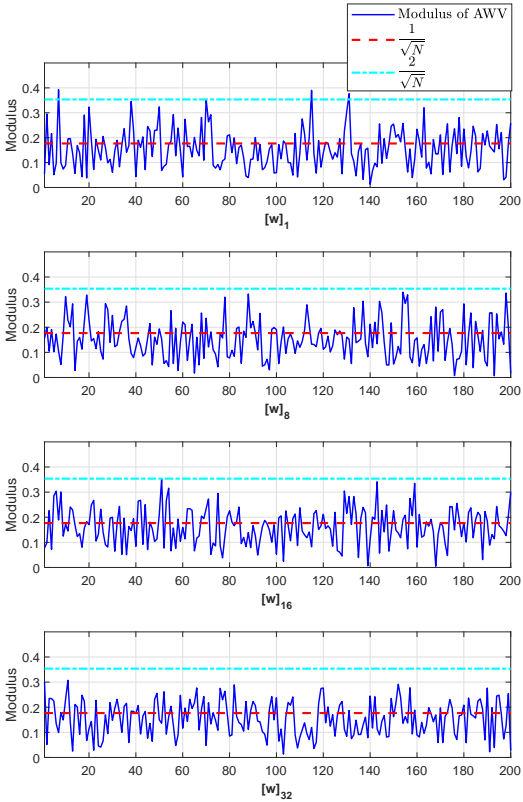}
  \caption{Moduli of the elements of the AWVs, where $N=32$, $K=4$ and $P/\sigma^2=25$ dB.}
  \label{fig:AWV_Modulus}
\end{center}
\end{figure}
Fig. \ref{fig:AWV_Modulus} shows the modulus of the elements of AWVs , where $N=32$, $K=4$ and $P/\sigma^2=25$ dB. We show the 1st, 8th, 16th and 32th element of 200 AWVs with different channel realizations. It can be seen that the moduli of the AWV's elements are mainly distributed around $1/\sqrt{N}$, and almost all of them have a modulus less than $2/\sqrt{N}$. The results in Fig. \ref{fig:AWV_Modulus} demonstrate that the modulus normalization for the DPS implementation has a limited impact on the performance.

In the second stage of the proposed solution, we have assumed that the channel response vectors are orthogonal and then found an appropriate AWV in \eqref{beamforming2}. To evaluate the impact of this approximation, we compare the performance of the proposed solution with the upper-bound performance. We solve Problem \eqref{beamforming2} using particle swarm optimization, where the density of particles is sufficiently high, and thus the obtained minimal achievable rate can be treated as the upper bound. Limited by the computational complexity, we provide the simulation results with a relatively small-scale antenna array, i.e., $N=8,16$.  The comparison result is shown in Fig. \ref{fig:RateBound_P}, where each point is averaged from $10^3$ LOS channel realizations. The minimal achievable rate of Ideal NOMA is based on the beamforming given in \eqref{BF_vector}, which is corresponding to the beamforming without the CM constraint and the orthogonality assumption of the channel vectors between the NOMA users. As we can see, when $N=8$, the performance gap between the proposed solution and the upper bound is no more than 0.25 bps/Hz. When $N=16$, the performance gap is even smaller, i.e., no more than 0.2 bps/Hz. The reason is that the orthogonality of the channel vectors becomes stronger when $N$ is larger. Thus, the approximation of the beamforming design in Problem \eqref{beamforming2} has limited impact on the system performance, and the proposed sub-optimal solution can achieve an near-upper-bound performance, especially when $N$ is large.

\begin{figure}[t]
\begin{center}
  \includegraphics[width=\figwidth cm]{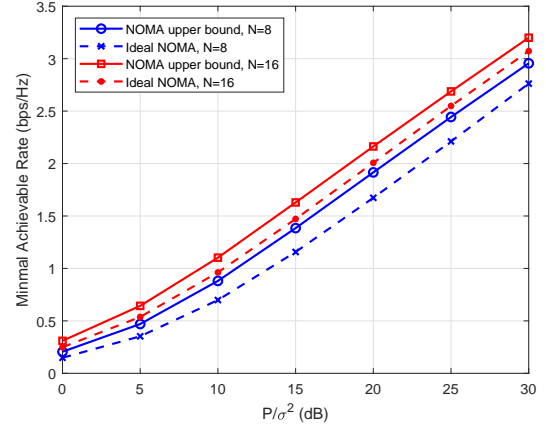}
  \caption{Comparison of the minimal achievable rates between the proposed solution and the upper bound with varying total power to noise ratio, where $K=4$.}
  \label{fig:RateBound_P}
\end{center}
\end{figure}

\begin{figure}[t]
\begin{center}
  \includegraphics[width=\figwidth cm]{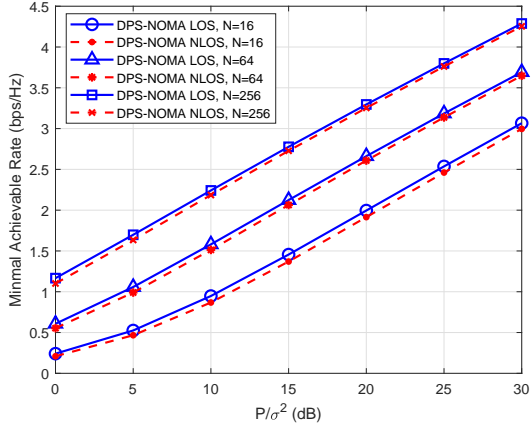}
  \caption{Performance comparison between LOS and NLOS channel models with varying total power to noise ratio, where $K=4$.}
  \label{fig:Rate_LOS_NOLS}
\end{center}
\end{figure}
Fig. \ref{fig:Rate_LOS_NOLS} compares the minimal achievable rates of mmWave-NOMA under the LOS and NLOS channel models with varying total power to noise ratio. The number of antennas is $N=16, 64, 256$, respectively. The number of users is $K=4$. Each point in Fig. \ref{fig:Rate_LOS_NOLS} is the average performance of $10^4$ channel realizations. It can be seen that the performance of DPS-NOMA with the LOS channel model is slightly better than that with the NLOS channel model, because the channel power is more centralized for the LOS channel. However, the performance gap between them is quite small, especially when $N$ is large. The reason is that according to \eqref{beam_gain}, the effective channel gain is linear to $\|\mathbf{h}_{k}\|$, the norm of the channel vector, rather than that of the power of the strongest path. Thus, the performance gap of DPS-NOMA with the LOS and NLOS channel models is small.

\begin{figure}[t]
\begin{center}
  \includegraphics[width=\figwidth cm]{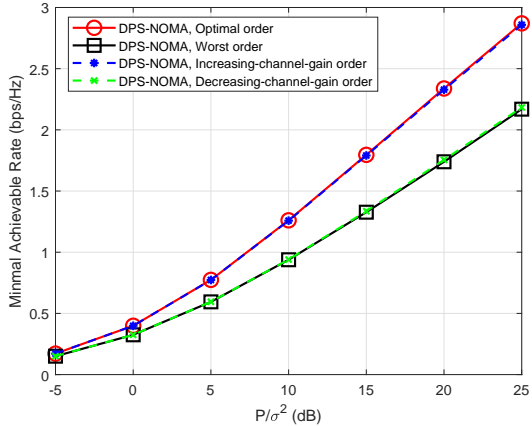}
  \caption{Comparison of the minimal achievable rates under different decoding orders with varying total power to noise ratio, where $N=32$ and $K=4$.}
  \label{fig:Rate_decoding_order}
\end{center}
\end{figure}

The simulations above are all based on the increasing-channel-gain decoding order. Next, we will show the impact of the decoding order on the mmWave-NOMA system. Fig. \ref{fig:Rate_decoding_order} shows the performance comparison between different decoding orders with varying total power to noise ratio, where $N=32$ and $K=4$. There are 24 decoding orders in total for the 4 users. Each point in Fig. \ref{fig:Rate_decoding_order} is the average performance of $10^4$ LOS channel realizations. The minimal achievable rates of the 24 decoding orders are all calculated. The order with the highest minimal achievable rate is chosen as the optimal order and the order with the lowest minimal achievable rate is chosen as the worst order. The increasing-channel-gain order is the one adopted in our solution, while the decreasing-channel-gain order is one for comparison. From the figure we can find that there is a significant performance gap between the optimal order and the worst order, which means that the decoding order has an important impact on the performance of mmWave-NOMA. Moreover, the performance with the increasing-channel-gain order is almost the same as the optimal one, while the performance with the decreasing-channel-gain order is almost the same as the worst one. This result shows the rational of adopting the increasing-channel-gain order in our solution.

\section{Conclusion}
In this paper, we have investigated downlink max-min fairness mmWave-NOMA with analog beamforming. A joint beamforming and power allocation problem was formulated and solved in two stages. In the first stage, the closed-form optimal power allocation was obtained with an arbitrary fixed AWV, reducing the joint beamforming and power allocation problem into an equivalent beamforming problem. Then, an appropriate beamforming vector was obtained by utilizing the spatial sparsity in the angle domain of the mmWave channel. Both implementations of SPS and DPS were considered with different modulus normalizations. The simulation results demonstrate that the modulus normalization has limited impact on the achievable rate performance, especially for the DPS implementation. Moreover, by using the proposed solution, the considered mmWave-NOMA system can achieve a near-upper-bound performance of the minimal achievable rate, which is significantly better than that of the conventional mmWave-OMA system.

\appendices
\section{Proof of Theorem 1}

Without loss of generality, we denote $\{p_{k}^{\star},r^{\star}\}$ one optimal solution of Problem \eqref{eq_problem3} with fixed $\mathbf{w_{0}}$, where the achievable rate of User $k$ is $R_{k}^{\star}$, and let $\eta^{\star}=2^{r^\star}-1$.

With $\eta^{\star}$ we can obtain another solution $\{p_{k}^{\circ},r^{\star}\}$ , where
\begin{equation}\label{power_allo}
\left\{\begin{aligned}
&p_{1}^{\circ}=\eta^{\star}\frac{\sigma^2}{\left |\mathbf{h}_{1}^{\rm{H}}\mathbf{w}_{0} \right |^{2}},\\
&p_{2}^{\circ}=\eta^{\star}(p_{1}^{\circ}+\frac{\sigma^2}{\left |\mathbf{h}_{2}^{\rm{H}}\mathbf{w}_{0} \right |^{2}}),\\
&~~~~\vdots\\
&p_{K}^{\circ}=\eta^{\star}(\sum \limits_{m=1}^{K-1} p_{m}^{\circ}+\frac{\sigma^2}{\left |\mathbf{h}_{K}^{\rm{H}}\mathbf{w}_{0} \right |^{2}}).
\end{aligned}\right.
\end{equation}
The following lemma shows that this solution is also an optimal one.

\begin{lemma}
The solution $\{p_{k}^{\circ},r^{\star}\}$ is also an optimal solution of Problem \eqref{eq_problem3}, and the achievable rates under this parameter setting always satisfy $R_{k}^{\circ}=r^{\star}~(1\leq k\leq K)$.
\end{lemma}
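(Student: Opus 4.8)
The plan is to check that the constructed triple $\{p_k^\circ,\mathbf{w_0},r^\star\}$ is feasible for Problem \eqref{eq_problem3} and then conclude optimality from the fact that $r^\star$ is already the optimal value. The first step is an identity: substituting the recursive definition \eqref{power_allo} into the rate expression \eqref{eq_Rate} evaluated at $\mathbf{w_0}$. By construction $\left|\mathbf{h}_{k}^{\rm{H}}\mathbf{w}_{0}\right|^{2}p_{k}^{\circ}=\eta^{\star}\left(\left|\mathbf{h}_{k}^{\rm{H}}\mathbf{w}_{0}\right|^{2}\sum_{m=1}^{k-1}p_{m}^{\circ}+\sigma^{2}\right)$, so the fraction inside the logarithm collapses to exactly $\eta^{\star}$ for every $k$, whence $R_{k}^{\circ}=\log_{2}(1+\eta^{\star})=r^{\star}$. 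This already proves the second assertion of the lemma and shows that constraint $C_0$ holds (with equality).

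The core of the argument is verifying $C_2$, i.e.\ $\sum_{k=1}^{K}p_{k}^{\circ}\le P$. I would prove the stronger pointwise bound $p_{k}^{\circ}\le p_{k}^{\star}$ by induction on $k$. For $k=1$, User~1 has no intra-cell interference, so $R_{1}^{\star}\ge r^{\star}$ reads $\left|\mathbf{h}_{1}^{\rm{H}}\mathbf{w}_{0}\right|^{2}p_{1}^{\star}/\sigma^{2}\ge\eta^{\star}$, which is precisely $p_{1}^{\star}\ge p_{1}^{\circ}$. For the inductive step, assume $p_{m}^{\circ}\le p_{m}^{\star}$ for $m<k$; rearranging the feasibility constraint $R_{k}^{\star}\ge r^{\star}$ of the original optimal solution gives $p_{k}^{\star}\ge\eta^{\star}\left(\sum_{m=1}^{k-1}p_{m}^{\star}+\sigma^{2}/\left|\mathbf{h}_{k}^{\rm{H}}\mathbf{w}_{0}\right|^{2}\right)$, and since this right-hand side is nondecreasing in each $p_{m}^{\star}$ and $\sum_{m<k}p_{m}^{\star}\ge\sum_{m<k}p_{m}^{\circ}$ by the induction hypothesis, it is $\ge\eta^{\star}\left(\sum_{m=1}^{k-1}p_{m}^{\circ}+\sigma^{2}/\left|\mathbf{h}_{k}^{\rm{H}}\mathbf{w}_{0}\right|^{2}\right)=p_{k}^{\circ}$. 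Summing over $k$ yields $\sum_{k}p_{k}^{\circ}\le\sum_{k}p_{k}^{\star}\le P$. Nonnegativity $p_{k}^{\circ}\ge0$ (constraint $C_1$) is immediate because $\eta^{\star}=2^{r^{\star}}-1\ge0$ and every summand is positive, and $\mathbf{w_0}$ already satisfies $C_3$ by hypothesis; hence $\{p_{k}^{\circ},\mathbf{w_0},r^{\star}\}$ is feasible.

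Finally, this feasible point attains objective value $r^{\star}$, which coincides with the optimal value attained by $\{p_{k}^{\star},r^{\star}\}$; therefore $\{p_{k}^{\circ},r^{\star}\}$ is itself an optimal solution, completing the proof. The only step that needs genuine care is the induction: it relies on the monotonicity of the per-user SINR in the accumulated interference power $\sum_{m<k}p_{m}$ and on invoking the original constraint $C_0$ to lower-bound $p_{k}^{\star}$ in terms of $p_{1}^{\star},\dots,p_{k-1}^{\star}$. Lemma~1 is not needed for this lemma, though it will be used immediately afterwards to rewrite $\sum_{k}p_{k}^{\circ}$ in the closed form appearing in Problem \eqref{beamforming}. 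Beyond this bookkeeping I do not expect any real obstacle.
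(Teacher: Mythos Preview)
Your proposal is correct and follows essentially the same approach as the paper: verify $R_k^\circ=r^\star$ by direct substitution of \eqref{power_allo} into \eqref{eq_Rate}, prove $p_k^\circ\le p_k^\star$ by induction on $k$ using the feasibility inequality $R_k^\star\ge r^\star$ rearranged as $p_k^\star\ge\eta^\star\big(\sum_{m<k}p_m^\star+\sigma^2/|\mathbf h_k^{\rm H}\mathbf w_0|^2\big)$, and then sum to obtain $C_2$. The only cosmetic difference is the order of presentation (the paper checks $C_1$, then $C_2$, then $C_0$), and your remark that Lemma~1 is not needed here is accurate.
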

\begin{proof}
First, we need to verify that the constraints $C_{0}$, $C_{1}$ and $C_{2}$ are all satisfied.

According to the expression of \eqref{power_allo}, it is obvious that $\{p_{k}^{\circ}\geq 0\}$, which means that the constraint $C_{1}$ is satisfied.

In addition, according to the assumption that $\{p_{k}^{\star},r^{\star}\}$ is an optimal solution, we have
\begin{equation}\label{C0}
\begin{aligned}
&r^{\star} \leq R_{k}^{\star} \\
\Rightarrow &\eta^{\star} \leq \frac{\left |\mathbf{h}_{k}^{\rm{H}}\mathbf{w}_{0} \right |^{2}p_{k}^{\star}}{\left |\mathbf{h}_{k}^{\rm{H}}\mathbf{w}_{0} \right |^{2}\sum \limits_{m=1}^{k-1}p_{m}^{\star}+\sigma^{2}} \\
\Rightarrow &\eta^{\star}(\sum \limits_{m=1}^{k-1} p_{m}^{\star}+\frac{\sigma^2}{\left |\mathbf{h}_{k}^{\rm{H}}\mathbf{w}_{0} \right |^{2}}) \leq p_{k}^{\star}.
\end{aligned}
\end{equation}
Next, we use mathematical induction to prove that $p_{k}^{\circ}  \leq p_{k}^{\star} ~(k=1,2,\cdots,K)$.

When $k=1$, according to \eqref{C0} we have
\begin{equation}\label{eq_pk1}
p_{1}^{\circ} \leq p_{1}^{\star}.
\end{equation}

When $k=n~(n\geq1)$, assume $\{p_{1}^{\circ} \leq p_{1}^{\star},\cdots, p_{n}^{\circ} \leq p_{n}^{\star}\}$. According to \eqref{C0} we have
\begin{equation}\label{eq_pk2}
\begin{aligned}
&p_{n+1}^{\circ}=\eta^{\star}(\sum \limits_{m=1}^{n} p_{m}^{\circ}+\frac{\sigma^2}{\left |\mathbf{h}_{n+1}^{\rm{H}}\mathbf{w}_{0} \right |^{2}})\\
&\leq \eta^{\star}(\sum \limits_{m=1}^{n} p_{m}^{\star}+\frac{\sigma^2}{\left |\mathbf{h}_{n+1}^{\rm{H}}\mathbf{w}_{0} \right |^{2}})\leq p_{n+1}^{\star}.
\end{aligned}
\end{equation}

Thus, we can conclude that $p_{k}^{\circ}  \leq p_{k}^{\star} ~(k=1,2,\cdots,K)$ and we have
\begin{equation}
\sum \limits_{k=1}^{K} p_{k}^{\circ} \leq \sum \limits_{k=1}^{K} p_{k}^{\star} \leq P,
\end{equation}
which means that the constraint $C_2$ is satisfied.

With the considered solution $(p_{k}^{\circ},r^\star)$, we have
\begin{equation}
\begin{aligned}
R_{k}^{\circ}&=\log_{2}(1+ \frac{\left |\mathbf{h}_{k}^{\rm{H}}\mathbf{w}_{0} \right |^{2}p_{k}^{\circ}}{\left |\mathbf{h}_{k}^{\rm{H}}\mathbf{w}_{0} \right |^{2}\sum \limits_{m=1}^{k-1}p_{m}^{\circ}+\sigma^{2}})\\
&=\log_{2}(1+ \frac{p_{k}^{\circ}}{\sum \limits_{m=1}^{k-1}p_{m}^{\circ}+\frac{\sigma^{2}}{\left |\mathbf{h}_{k}^{\rm{H}}\mathbf{w}_{0} \right |^{2}}})\\
&\substack{{(a)}\\=}\log_{2}(1+\eta^{\star})\\
&=r^{\star},
\end{aligned}
\end{equation}
where $(a)$ is based on \eqref{power_allo}. The above equation means that the constraint $C_0$ is satisfied.

Since $\{p_{k}^{\circ},r^{\star}\}$ can satisfy all the constraints, and $R_{k}^{\circ}=r^{\star}~(1\leq k\leq K)$, it is also an optimal solution of Problem \eqref{eq_problem3}.
\end{proof}

As both $\{p_{k}^{\circ},r^{\star}\}$ and $\{p_{k}^{\star},r^{\star}\}$ are optimal solutions of Problem \eqref{eq_problem3}, we will prove that they are in fact the same as each other. For this sake, we need to prove that $R_{k}^{\star}=r^{\star}~(1\leq k\leq K)$. We assume that there exists one user whose achievable is strictly larger than $r^{\star}$, i.e., $R_{k_{0}}^{\star}>r^{\star}$, and we will prove that this assumption does not hold as follows.

As we have assumed that $R_{k_{0}}^{\star}>r^{\star}$, we have $R_{k_{0}}^{\star}>R_{k_{0}}^{\circ}=r^{\star}$. In addition, we have proven that $p_{k}^{\circ} \leq p_{k}^{\star}$ (see the proof in \eqref{eq_pk1} and \eqref{eq_pk2}). According to the expression of $R_{k}$ in \eqref{eq_Rate}, it is straightforward to derive $p_{k_{0}}^{\star}>p_{k_{0}}^{\circ}$.

We define another solution $\{p_{k}^{\vartriangle},r^{\vartriangle}\}$, where $r^{\vartriangle}=r^{\star}+\delta$, and
\begin{equation}\label{power_allo2}
\left\{\begin{aligned}
&p_{1}^{\vartriangle}=\eta^{\vartriangle}\frac{\sigma^2}{\left |\mathbf{h}_{1}^{\rm{H}}\mathbf{w}_{0} \right |^{2}},\\
&p_{2}^{\vartriangle}=\eta^{\vartriangle}(p_{1}^{\vartriangle}+\frac{\sigma^2}{\left |\mathbf{h}_{2}^{\rm{H}}\mathbf{w}_{0} \right |^{2}}),\\
&~~~~\vdots\\
&p_{K}^{\vartriangle}=\eta^{\vartriangle}(\sum \limits_{m=1}^{K-1} p_{m}^{\vartriangle}+\frac{\sigma^2}{\left |\mathbf{h}_{K}^{\rm{H}}\mathbf{w}_{0} \right |^{2}}),
\end{aligned}\right.
\end{equation}
where $\eta^{\vartriangle}=2^{r^{\vartriangle}}-1$ and $\delta>0$. Thus, we have $\eta^{\vartriangle}>\eta^{\star}$.

Next, we prove that $\{p_{k}^{\vartriangle},r^{\vartriangle}\}$ is within the feasible region of Problem \eqref{eq_problem3}. Similar to the proof in Lemma 2, we can prove that $\{p_{k}^{\vartriangle}\geq 0\}$ and $R_{k}^{\vartriangle}=r^{\vartriangle}> r^{\star}~(1\leq k\leq K)$, which means that the constraints $C_{0}$ and $C_{1}$ are satisfied. According to Lemma 1, the summation of power allocation in \eqref{power_allo} and \eqref{power_allo2} are $g(\eta^{\star})$ and $g(\eta^{\vartriangle})$, respectively. As we have proven that $p_{k_{0}}^{\star}>p_{k_{0}}^{\circ}$, we have $g(\eta^{\star})<P$. Otherwise, if $g(\eta^{\star})=P$, $\sum \limits_{k=1}^{K} p_{k}^{\star}>\sum \limits_{k=1}^{K} p_{k}^{\circ}=g(\eta^{\star}) =P$, which is contradictory to Constraint $C_{2}$ in Problem \eqref{eq_problem3}. As $g(\eta)$ is an increasing function for $\eta$, we can always find a small positive $\delta$, which satisfies $g(\eta^{\star}+\delta)<P$, i.e., $g(\eta^{\vartriangle})<P$. Thus, the constraint $C_{2}$ is satisfied with sufficiently small $\delta$.

In brief, $\{p_{k}^{\vartriangle},r^{\vartriangle}\}$ is within the feasible region of Problem \eqref{eq_problem3} provided that $\delta$ is small enough. However, we have $R_{k}^{\vartriangle}=r^{\vartriangle}> r^{\star}~(1\leq k\leq K)$, which means that the solution $\{p_{k}^{\vartriangle},r^{\vartriangle}\}$ is better than $\{p_{k}^{\star},r^{\star}\}$, which is contradictory to the fact that $\{p_{k}^{\star},r^{\star}\}$ is an optimal solution. Thus, the assumption that there exists one user whose achievable is strictly larger than $r^{\star}$ does not hold. Equivalently, the achievable rates of users under the optimal power allocation satisfy $R_{k}^\star=r^\star=R_{k}^{\circ}~(1\leq k \leq K)$. Solve the equations set above and we can obtain that $\{p_{k}^{\star},r^{\star}\}$ is the same as $\{p_{k}^{\circ},r^{\star}\}$, and the optimal power allocation of Problem \eqref{eq_problem3} is given by \eqref{power_criterion}.


\end{document}